\newtheorem{theorem}{Theorem}
\newtheorem{proof}{Proof}[section]
\begin{document}
%
\title{Cloud based Real-Time and Low Latency Scientific Event Analysis}
\author{\IEEEauthorblockN{Chen Yang\IEEEauthorrefmark{1},
Xiaofeng Meng\IEEEauthorrefmark{1}\thanks{Corresponding author by Xiaofeng Meng},
Zhihui Du\IEEEauthorrefmark{2}
}
\IEEEauthorblockA{\IEEEauthorrefmark{1}School of Information, Renmin University, Beijing, China\\ Email: \{yang\_chen, xfmeng\}@ruc.edu.cn}
\IEEEauthorblockA{\IEEEauthorrefmark{2}Department of Computer Science and Technology, Tsinghua University, China\\
Email: duzh@tsinghua.edu.cn}
}
\maketitle

\begin{abstract}
Astronomy is well recognized as big data driven science. As the novel observation infrastructures are developed, the sky survey cycles have been shortened from a few days to a few seconds, causing data processing pressure to shift from offline to online. However, existing scientific databases focus on offline analysis of long-term historical data, not real-time and low latency analysis of large-scale newly arriving data.

In this paper, a cloud based method is proposed to efficiently analyze scientific events on large-scale newly arriving data. The solution is implemented as a highly efficient system, namely Aserv. A set of compact data store and index structures are proposed to describe the proposed scientific events and a typical analysis pattern is formulized as a set of query operations. Domain aware filter, accuracy aware data partition, highly efficient index and frequently used statistical data designs are four key methods to optimize the performance of Aserv. Experimental results under the typical cloud environment show that the presented optimization mechanism can meet the low latency demand for both large data insertion and scientific event analysis. Aserv can insert 3.5 million rows of data within 3 seconds and perform the heaviest query on 6.7 billion rows of data also within 3 seconds. Furthermore, a performance model is given to help Aserv choose the right cloud resource setup to meet the guaranteed real-time performance requirement.
\end{abstract}

\section{Introduction}
In astronomy, short astronomical phenomena mean grand scientific discoveries. Up to now, only 10 astronomical phenomena lasting within 1 day are found. Actually, existing astronomy projects cannot effectively search optical transient sources who are during a few hours, due to the long sky survey cycles. For example, the survey cycle of both SDSS\cite{szalay2002sdss} and LSST\cite{Wang2011Qserv} are 3-5 days. For searching short and unknown astronomical phenomena, the fast sky survey projects have become a new trend. For example, GWAC (Ground-based Wide Angle Camera)\cite{wan2016column}, having the shortest sky survey cycle in the world, continuously observes the fixed 1/4 Northern Hemisphere within 15 seconds. The new instruments lead a new kind of big scientific data and different analysis needs.


The new survey data provides scientists a completely new way to achieve scientific discovery. Scientists often need to launch an analytical query on newly arriving data to confirm a scientific event and issue an alert as soon as possible. This requirement is basic since short astronomical phenomena are transient and hard to reproduce, such as microlensing. Thus, it tends to the online analysis and the analysis methods are totally different from offline analysis. We extract three typical analysis methods: probing, listing and stretching. They formulize the analysis behavior of scientists on newly arriving data from the general view to the deep insight.


\begin{figure}[t]
  \centering
  \includegraphics[width=2.5in,height=1.7in]{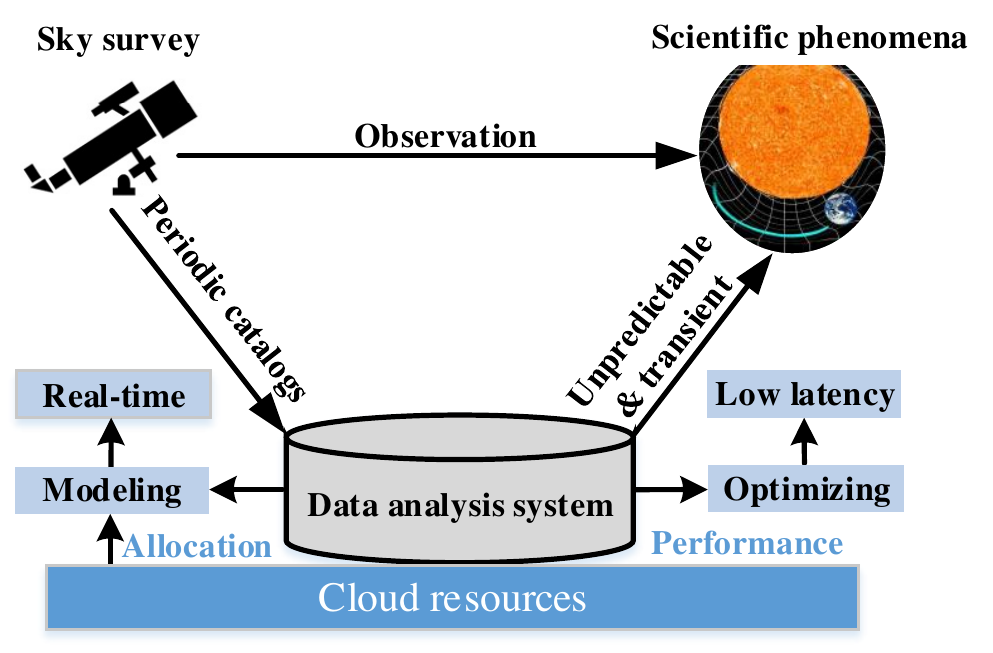}
  \caption{Real-time and low latency scientific event analysis on cloud}\label{fig:problemExample}
\end{figure}

To support the wanted analysis methods on cloud, data analysis system behind these instruments desires both \textbf{real-time and low latency} ability, as shown in Figure \ref{fig:problemExample}. As the survey data is periodically collected from scientific objects, data analysis system must take real-time processing to guarantee the data temporal consistency\cite{Lindstr2008Real} and no data loss. Since scientific events discussed here are unpredictable and transient, low latency data analysis is necessary for scientists to identify the events as early as possible. In addition, the expense of large scientific projects often exceeds the budget because the projects often last long and more time and many unpredicted difficulties will happen. So suitable cloud resource setup is necessary to reduce the fixed expense of data analysis system.

 To recap, on the premise of the low resource overhead, the data insertion and query time must be less than the survey cycle and the lower it is, the better the performance is. For the data insertion operation, challenges mainly involve: (1) the cost of distributed processing, (2) data size and (3) the latency trade-off between insertion and query. The large-scale data insertion will take up much network and storage resources. Compression will incur more computational cost and sampling may lose some key data that is often unacceptable for scientists when using these techniques to reduce the data size. If we simply insert data collected at a survey cycle as a catalog file, the query latency on unstructured data will be very high. To enable low latency query, index on scientific events is necessary. However, the expensive cost in index update and insertion often prevents the low latency query.

 Based on the above, we propose Aserv, a lightweight system for real-time and low latency scientific event analysis. The key idea to improve the performance is \textbf{cutting down unnecessary cost as much as possible}. Without losing the availability of our system, three policies are developed to improve the overall performance: (1) removing irrelevant data; (2) adjusting the query accuracy to an acceptable range instead of as high as possible and (3) eliminating too expensive operations. Furthermore, we also develop a \textbf{performance model} to help Aserv determine resource setup to meet the performance constraint.

A set of exquisite optimization methods are employed on the two major components of Aserv: (1) the data insertion part is a real-time processing pipeline to ingest scientific data and load them into key-value store, and (2) the query engine supports low latency scientific event analysis. The insertion component includes three major modules: filter, data organization and pre-analysis. We only select highly related information from original data stream to achieve a significant data reduction that can save both computation and storage cost. Data organization module physically partitions scientific data into different sections so Aserv can greatly reduce network requirement and improve the insertion performance by ingesting partition data, instead of independent data tuples. Correspondingly, query engine in Aserv also implements an accuracy aware search strategy to improve the query performance. In addition, Aserv builds a highly effective index in data organization module and produces statistical data for scientific events in pre-analysis module. Both of them can avoid the access of original data and they also have the insertion-friendly structure.


We evaluate Aserv in a real astronomical project\cite{wan2016column}. Experimental results show that Aserv can really work. In summary, the major contributions are as follows:
\begin{itemize}
\item We propose the real-time and low latency analysis problem in fast sky survey and formulize it as three typical query operations.
\item We develop a cloud based distributed system Aserv for real-time and low latency scientific event analysis. Aserv employs several efficient policies to improve the system performance, including a filter strategy DAfilter, a partitioner EPgrid, SEPI index and an approach PCAG to generate frequently used statistical data.
\item We present a performance model for Aserv. It can help Aserv meet performance constraints under cloud scenario.
\end{itemize}

The rest of the paper is organized as follows. The problem description is in Section \ref{section:dataModel}. Aserv framework is in Section \ref{section:Aserv}. The performance model is in Section \ref{section:Performance}. Our experimental results are in Section \ref{section:experiments}. The summary is in Section \ref{section:summary}.

\section{Scientific Event Analysis}\label{section:dataModel}
In this paper, we focus on the fast sky survey which has been very popular in astronomy recently. Here, the observation instruments consist of multiple observation units, which continuously observe a fixed region per survey cycle $ct$. We assume that the fixed region area is $es$ in Euclidean 2-space and each unit deals with a square sub-area $s$ in $es$. This hypothesis makes sense in many cases. For example, observation instruments collect data by taking images, so that Euclidean distance is usually used to distinguish objects in astronomy\cite{webpage:Topcat}. Noting that observation units in our assumption are logical. In the extreme case, instruments may only have one physical observation unit, but we can still partition the observed area to simulate multiple logical observation units. Thus, it does not affect our assumption.
\begin{figure}[t]
  \centering
  \includegraphics[width=3.5in,height=1.5in]{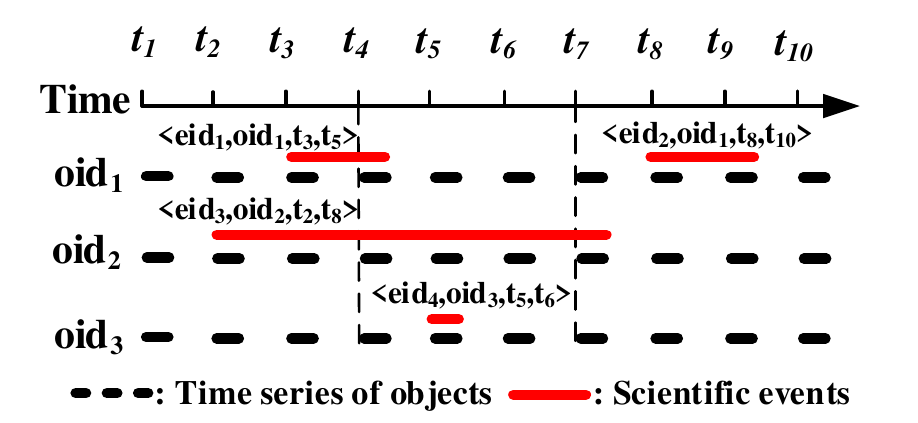}
  \caption{Example of typical analysis for scientific events}\label{fig:Examples}
\end{figure}

\textbf{Data model}. During the observation (i.e., one night), we assume that the instruments can observe $n$ objects. When there are enough objects in the observation region, the resolution limitation of observation units causes the collected objects to be almost constant, such as $\sim$175,600 objects for GWAC\cite{wan2016column}. We assume that the observed objects are well distributed in space. For each cycle, the data collected by each observation unit is organized as the catalog file with the same size. Data tuples in catalogs are in the form of $<oid,x,y,t,d_1,...,d_m>$, where $x$ and $y$ represent locations in $es$. $Oid$ and $t$ are the object name and timestamp, respectively. The rest are data items. Tuples of different objects in the same catalog have the same timestamp. Along the time dimension, tuples of the same object in different catalogs form an amount of time series data.

\textbf{Scientific event}. In addition, each observation unit contains an event detector which can recognize objects who may subject to scientific phenomena from each catalog. Finally, emit a scientific event set $Eset$ including the candidate object $oid$s. Thus, we define a scientific event in the form of $<eid,oid,stime,etime>$ where $eid$, $stime$ and $etime$ are the scientific event ID, event start and end time, respectively. This definition suggests that an object may appear multiple scientific phenomena during the observation.

    \begin{figure*}[t]
      \centering
      \includegraphics[width=5in,height=2.0in]{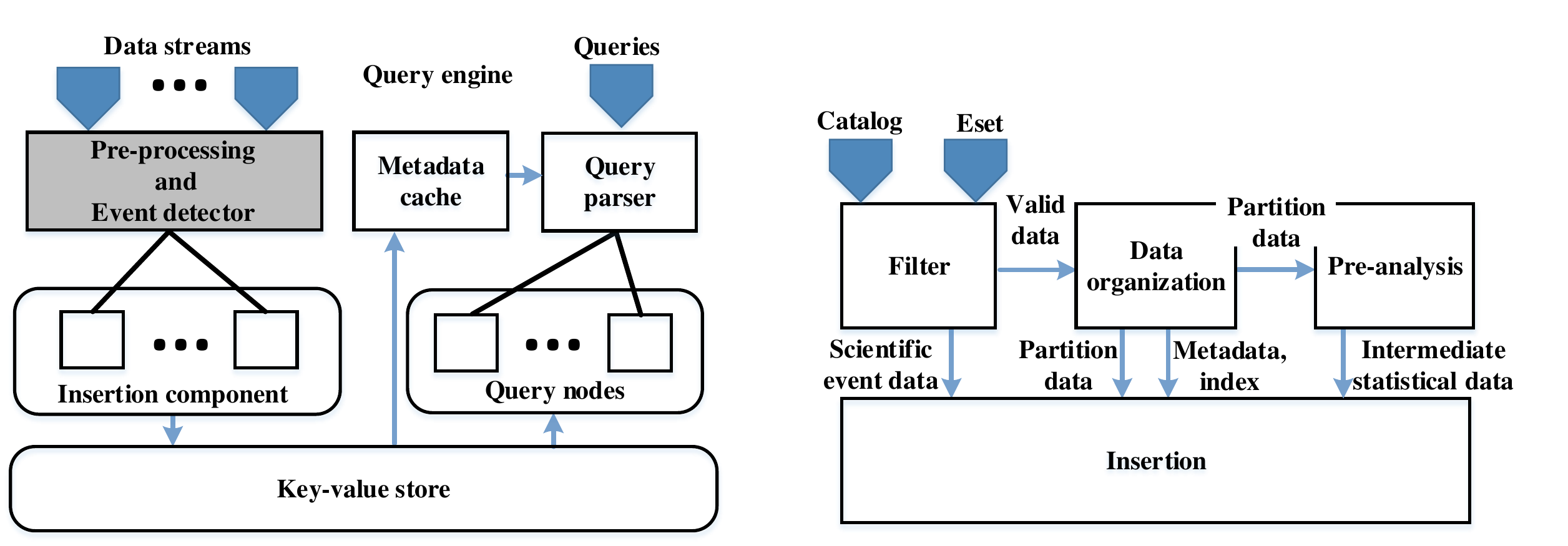}
      \caption{Aserv (Left) includes insertion component, query engine and key-value store. The insertion component (Right) follows ``master/slave" mode. The master is exploited to register and monitor the workers, and each worker node works for an observation unit. Workers ingest the catalog file and $Eset$ from pre-processing module and event detector and load them into key-value store. Query engine supports three typical analysis methods cited above.}\label{fig:Aserv}
    \end{figure*}
For a given query, Aserv must handle two basic operators: $region(x,y,r)$ and $timeinterval(t_s,t_e)$. The space constraint operator $region(x,y,r)$ means to search scientific events within a circle where $x$ and $y$ are the center locations and $r$ is the radius. $Region(x,y,r)$ is suitable for the neighborhood search. In some cases, the search region may be the rectangle. Relatively speaking, the circle region is more commonly used and more complex to implement. Thus, we do not discuss the rectangle region. $Timeinterval(t_s,t_e)$ is a time interval between $t_s$ and $t_e$.

In real-time scientific discovery, we propose three analysis methods for scientific events. An example is shown in Figure \ref{fig:Examples}, where $oid_{1-3}$ are observed by the same observation unit from $t_1$ to $t_{10}$. They formulize the analysis behavior of scientists.

\textbf{Probing analysis.} It mainly returns overview information that can help scientists to have a quick view on scientific events, such as some aggregation operations. This analytical method is useful in the alert of scientific events. In this paper, we focus on scientific event count, which answers how many scientific events in $timeinterval(t_s,t_e)$, because scientists are only interested in the occurrence of scientific events. For example, it returns 0 in $timeinterval(t_1,t_2)$ meaning no scientific events and 3 in $timeinterval(t_4,t_7)$. It, as the most frequent query, must have a low latency.

\textbf{Listing analysis.} It returns the complete information of scientific events. When scientists find the alert by probing analysis, they can use listing analysis to return the complete time series of scientific events in $timeinterval(t_s,t_e)$. We use interval query\cite{salzberg1999comparison} to implement listing analysis. Assuming that data items in a data set are time-evolving, as long as the data items appear within the given time interval, interval query will return all the items in this set. For example, it returns the time series corresponding to $eid_1$, $eid_3$ and $eid_4$ in $timeinterval(t_4,t_7)$, i.e., for scientific event $eid_1$ returning the time series of $oid_1$ between $t_3$ and $t_5$, etc. Although the durations of scientific events may only intersect the given time interval, we still return the complete time series, because scientists are always more concerned with a complete change in a scientific phenomenon.

\textbf{Stretching analysis.} It mainly returns the extend information of any scientific event. Given a scientific event found by listing analysis, scientists might be interested in its surroundings, such as a larger range of time or space. Therefore, it is a complement to listing analysis. We employ temporal range query to implement the time stretch, because scientists through this query could have a deeper insight on what happens before and after a scientific event. For a scientific event, this query returns a time series range in $timeinterval(stime-\Delta t_1,etime+\Delta t_2)$. As an example, for $eid_4$ this query returns the time series range of $oid_3$ between $t_4$ and $t_7$ using $\Delta t_1=\Delta t_2=1$.

Probing analysis and listing analysis can run with a space restriction. For example, scientists could perform listing analysis with both $region(x,y,r)$ and $timeinterval(t_s,t_e)$.

Aserv needs to meet the following performance requirements: (1) for each data survey, the insertion latency is required to be less than $ct$, and (2) the query latency over data tuples collected during the observation is required to be less than $ct$.

\section{Aserv Framework}\label{section:Aserv}
Aserv includes two major components and a key-value cloud store, as shown in Figure \ref{fig:Aserv}. The pre-processing module mainly includes the necessary scientific precessing, such as cross-match in astronomy\cite{nieto2007cross}. The event detector is used to search $Eset$ in each catalog. The pre-processing module and event detector have been implemented successfully in existing work\cite{Xu2013A, Feng2017Real}, so here we focus on the unsolved parts. The major modules of Aserv are discussed as follows.

 \textbf{Filter}. For the catalogs, the data dimension $m$ is usually large, such as 25 columns per data tuple collected by GWAC. It is important to reduce the data size, especially for in-memory store. Actually, in our online analysis scenario scientists only focus on scientific events and several major attributes. To significantly improve the system performance, we must filter out unnecessary information from complete data using the domain knowledge\cite{Klasky2017Exacution} and simultaneously achieve the efficient data reduction. We develop a \textbf{Domain Aware Filter (or DAfilter)} to support our optimization.
 \begin{itemize}
   \item In the first step, for every data tuple we filter the $c$ major attributes and store $<oid,x,y,t,d_1,...,d_c>$ where $c<<m$ (e.g., $c=1$ for GWAC) as \textbf{valid data}, instead of the original data tuple.
   \item In the second step, for each scientific event we create an $eid$ as a key and model its time series as key-list structure. In detail, for a given time $T$, we select the object $oid \in Eset$, and additionally append its original data tuple (i.e., $< oid,x,y,t,d_1,...,d_m>$) into key-list structure as \textbf{scientific event data}, because the complete information is also useful for scientists to analyze scientific events.
 \end{itemize}

 \textbf{Data organization}. In this module, we generate the metadta, partition the valid data and build the index. At the beginning, we use EPgrid (Section \ref{section:EPgrid}) to partition the observed region $es$ once to generate partition \textbf{metadata}. For key-value stores, the number of keys significantly impacts the insertion performance\cite{atikoglu2012workload}, so that we also partition the valid data into \textbf{partition data} to reduce keys. We organize the partition data with the same ID into key-list structure along the time dimension. We do not partition scientific event data due to less numbers, but we still record their partition IDs into scientific event data for easy querying. In addition, we use $Eset$ to update SEPI index (Section \ref{section:SEPI}).

 \textbf{Pre-analysis}. Probing analysis is important for scientific event alert. However, it has a poor performance to scan scientific event data or index. Thus, this module receives partition data and $Eset$ from data organization module and generates the \textbf{intermediate statistical data} to speed up frequently used probing analysis (Section \ref{section:ScientificEventCount}).

 \textbf{Insertion and key-value store}. The data, produced by the aforementioned modules, will be transformed into key-value pairs and ingest them into the key-value cloud store. On the one hand, the linear scalability of key-value stores is helpful to meet the performance constraints\cite{Stonebraker201110}. On the other hand, scientific event data can be well described as key-list structure, natively supported by key-value store.

 \textbf{Query engine}. Aserv in advance loads partition metadata from key-value store and caches them in memory to speed up the region search. When a query request comes, Aserv will approximatively parse $region$ operator to read the right partition data for stretching analysis, intermediate statistical data for probing analysis or scientific event data for listing analysis. Finally, the invoked query will run on them. We introduce these techniques as follows.
\subsection{Accuracy Aware Data Partition}\label{section:EPgrid}
In this section, we employ the grid scheme to partition the observation region of each observation unit. When $region$ operator is parsed, query engine will analyze partition metadata in local memory and filter partitions who are covered by search circle, no matter how much the covered area is. Finally, return partition IDs to index objects in the corresponding partitions. As shown in Figure \ref{fig:SEPIExample}, four partitions can be found.
Obviously, the total area covered by our strategy is always greater than the given search circle. Although scientists can tolerate that the irrelevant area is covered, it must be within a reasonable range. Thus, it is important to make sense of relation between the region search accuracy and the number of partitions.
\begin{figure}[t]
  \centering
  \includegraphics[width=3.5in,height=1.3in]{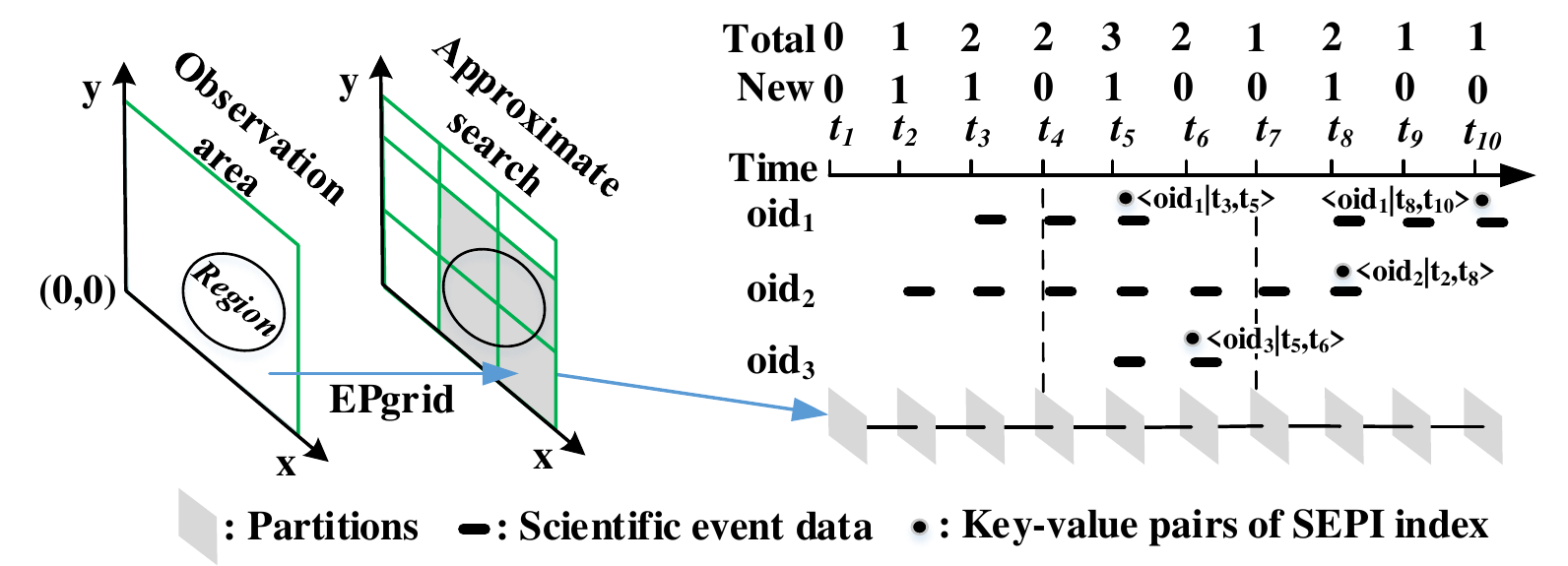}
  \caption{Example of partition and approximate search (Left), and the data storage scheme of one partition which includes $oid_{1-3}$ (Right).}\label{fig:SEPIExample}
\end{figure}

\begin{theorem}
  Given the acceptable minimum accuracy $\alpha$, the search radius $r$ and the observation sub-area $s$ of an observation unit, when the grid size is equal to each other, the minimum grid number $gn$ satisfies the equality defined as
\begin{equation}\label{eq:accRegion}
gn = \left\lceil {\frac{{64s{\alpha ^2}}}{{{{(\pi r(1 - \alpha ))}^2}}} } \right\rceil,
\end{equation}
which can ensure that the region search accuracy with radius $r$ is not lower than $\alpha$.
\end{theorem}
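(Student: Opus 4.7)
The plan is to interpret the accuracy as $\alpha = \pi r^2 / A_{\mathrm{ret}}$, where $A_{\mathrm{ret}}$ is the total area of the grid cells the scheme returns (i.e.\ those intersecting the search disk), bound $A_{\mathrm{ret}}$ in terms of the cell side length $l$ and the radius $r$, rearrange the inequality $\pi r^2 / A_{\mathrm{ret}} \ge \alpha$ into an upper bound on $l$, and finally use $l = \sqrt{s/gn}$ to convert this into the claimed lower bound on $gn$.

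First I would fix notation: since the $gn$ equal-sized cells tile the sub-area $s$, each has side $l = \sqrt{s/gn}$. Let $C$ denote the search disk of radius $r$, and split the cells intersecting $C$ into \emph{interior} cells (fully contained in $C$) and \emph{boundary} cells (crossed by $\partial C$). Interior cells contribute at most $\pi r^2$ to $A_{\mathrm{ret}}$, so if $N_b$ is the number of boundary cells then $A_{\mathrm{ret}} \le \pi r^2 + l^2 N_b$, and everything reduces to bounding $N_b$.

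The main technical step, and the one I expect to be the central obstacle, is producing a clean worst-case bound on $N_b$ that does not carry constants into the final formula. The route I would take is a grid-line crossing count: the axis-aligned bounding box of $C$ has side $2r$, so at most $\lceil 2r/l\rceil + 1$ horizontal grid lines can meet $C$, each is crossed by the circle $\partial C$ in at most two points (since $\partial C$ is convex), and the same holds for vertical grid lines. Because $\partial C$ is a closed curve, the number of cells it threads is at most the total number of such crossings; summing the horizontal and vertical contributions yields $N_b \le 8 r/l + O(1)$, and hence $l^2 N_b \le 8 r l$ to leading order, with the lower-order slack absorbed by the ceiling in the final expression.

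Substituting the resulting estimate $A_{\mathrm{ret}} \le \pi r^2 + 8 r l$ into $\pi r^2 / A_{\mathrm{ret}} \ge \alpha$ and clearing denominators gives $\pi r (1-\alpha) \ge 8 \alpha l$, i.e.\ $l \le \pi r (1-\alpha)/(8\alpha)$. Squaring and using $l^2 = s/gn$ rearranges to $gn \ge 64\, s\, \alpha^2 / (\pi r(1-\alpha))^2$; taking the ceiling produces the smallest integer $gn$ for which the guaranteed accuracy is at least $\alpha$, which is precisely the formula in the theorem. The algebraic steps at the end are routine; only the boundary-cell count in the previous paragraph needs careful geometric justification.
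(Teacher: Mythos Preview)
Your proposal is correct and follows essentially the same route as the paper: bound the returned area by $\pi r^2$ plus a boundary-cell contribution of order $8rl$, solve $\pi r^2/A_{\mathrm{ret}}\ge\alpha$ for $l$, and convert via $l^2=s/gn$. The only difference is cosmetic: the paper obtains the $8rl$ term by comparing with the circumscribed square (writing the boundary area as $4r(l+w)$ and then setting $l=w$), whereas you derive it from a grid-line crossing count; the resulting inequality and subsequent algebra are identical.
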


\begin{proof}
We first define $l$ and $w$ as the length and width for each grid, respectively. Further, the total area of grids covered by the approximate strategy is defined as $gs$. The area of grids which are covered by the search circle boundary is less than the area of grids which are covered by the circumscribed square boundary. Thus, $gs$ satisfies the inequality defined as
\begin{equation*}
  gs \le 4r(l + w) + \pi {r^2},
\end{equation*}
where $4r(l + w)$ is the area of grids which are covered by the circumscribed square boundary. We have known $gs \ge \pi {r^2}$ which is the size of search circle. Therefore, the region search accuracy $acc(region)=\pi {r^2}/gs$ satisfies the following inequality defined as
\begin{equation*}
acc(region) \ge \frac{{\pi r}}{{4(l + w) + \pi r}}.
\end{equation*}
It is consistent with the intuition that the search radius is positively correlated with the region search accuracy when grids are fixed. The acceptable minimum accuracy is known as $\alpha$, so we can solve the inequality as
\begin{equation*}
{l + w} \le \frac{{\pi{r}(1 - \alpha )}}{{4\alpha }}.
\end{equation*}
For obtaining the minimum grid number, we set $l=w$ and then $l=\pi r(1 - \alpha )/8\alpha$. Thus, solve $gn=s/l^2$ as Eq. (\ref{eq:accRegion}), and the minimum grid number increases with the reduction of the search radius. When giving the acceptable minimum radius, we can solve the final grid number.
\end{proof}

The occurrence of scientific events is usually independent to each other and almost can be seen as the uniform distribution. Therefore, the searched scientific events only depend on the search area, causing $acc(region)$ to be actually equal to the query accuracy. When Eq. (\ref{eq:accRegion}) is met, the query accuracy should be also greater than $\alpha$.

According to the suggestion of Eq. (\ref{eq:accRegion}), we implement EPgrid as an even grid scheme to partition the sub-area of each observation unit. For the given number of partitions, EPgrid tries to ensure that each grid is a square. Then, give every grid a unique partition ID. In addition, we record the lower left and upper right positions as partition metadata. Therefore, EPgrid can determine the partition ID of an object by a simple hash function, omitted due to space constraints. In addition, we design a map-only job to parse $region$, which only filters partitions and no data is transmitted over the network. Therefore, the performance to parse $region$ is easily improved with the extension of the cluster scale.


\subsection{SEPI Index}\label{section:SEPI}
SEPI index is employed to support listing analysis. The listing analysis searches $eid$s of scientific events using SEPI and loads the time-series of the corresponding scientific events. SEPI can be efficiently inserted and updated, and the index tier supports the efficiently distributed query. The most similar index to SEPI is EPI\cite{sfakianakis2013interval}. Compared with EPI, indexing on SEPI is faster by eliminating expensive operations, and its size is half of EPI.

If an object $oid$ just appears in $Eset$ at the current time $t$, it means a new scientific event. We set $eid=oid|t$ where ``$|$" means the concatenation of two strings, and emit a key-value pair $<eid,t>$ to key-value store. Otherwise, we update the value of existing $eid$ into $t$. We will keep updating the SEPI index until scientific events end. In other words, in key-value store we only keep the key-value pairs $<oid|stime, etime>$. Because we only store etime as the value, we call the index as Single Endpoint Index (or SEPI). SEPI is actually a set of key-value pairs, so that we can insert and update items very fast.

Giving a $timeinterval(t_s,t_e)$ to listing analysis, the query using SEPI is as follows. First, we execute one scan in parallel on SEPI inside key-value store: a $scan()$ for all key-value pairs of SEPI with values in $[t_s,+\infty)$ and loading them into the heap space of query engine. Second, we also execute one filter in parallel on key-value pairs loaded out: a $filter()$ for key-value pairs with $stime \le t_e$ where stime can be extracted from the key string (i.e., $oid|stime$). After the filter, scientific events whose time intervals intersect $timeinterval(t_s,t_e)$ will be found. As shown in Figure \ref{fig:SEPIExample}, in $timeinterval(t_4,t_7)$ three scientific events can be found.

More specifically, even though $t_s$ is little causing a large part of SEPI to be spanned, it does not have a dramatic impact on the performance. First, Aserv is designed for real-time analysis of scientific events during the observation in which data size is large but the observation duration is not long. Second, we directly use the key-value store's $scan()$ primitive to load data, not resulting in more overhead. Finally, query engine only scans SEPI once and the filter operation is a may-only job which is inherently fast on cloud systems.

EPI\cite{sfakianakis2013interval} keeps two key-value pairs for each scientific event in key-value store, where one records start endpoint and another records end endpoint. Given a $timeinterval(t_s,t_e)$, two scans need to be executed. One scan is to find scientific events whose endpoints appear in $[t_s,t_e]$. In addition, for scientific event whose both endpoints are contained in $[t_s,t_e]$, $distinct()$ operation is performed to remove duplicates. Another scan to find Scientific events who pierce $[t_s,t_e]$. Finally, return the union of the two result sets. Actually, SEPI, compared with EPI, excludes $distinct()$ and an extra scan operation. In addition, SEPI's size is only half of EPI, since one key-value pair is kept for each scientific event.

\subsection{Partition Count Aggregation}\label{section:ScientificEventCount}
Partition count aggregation (or PCAG) can support efficient probing analysis. The main idea is that we in advance generate the count for each partition. These counts will be merged to solve the final result to avoid the scan of original data and index when launching probing analysis.

For a given time $T$, we partition $Eset$ using EPgrid with the same parameters. In every partition, we count the total number $Total$ of $oid$s in $Eset$ and the number $New$ of new scientific events whose $stime=T$. For example, as shown in Figure \ref{fig:SEPIExample} $Total=2$ and $New=1$ at $t_3$ mean that two scientific events run through $t_3$ and one of them is just emerging. We generate a key-value pair as \textbf{Intermediate Count Result (or ICR)} where the key contains the partition ID and the value is $Total|New|T$. The new ICR is appended into key-value store. ICRs of the same partition are organized as a key-list structure.

\begin{equation}\label{eq:count}
count(p) = Total({t_s}) + \sum\limits_{i = {t_{s + 1}}}^{{t_e}} {New(i)}
\end{equation}

Query engine first parses $region$ and $timeinterval(t_s,t_e)$ operators to obtain the partition IDs and the time range, and then loads the corresponding ICRs. For a partition $p$, probing analysis satisfies Eq. (\ref{eq:count}). For example, as shown in Figure \ref{fig:SEPIExample} in $timeinterval(t_4,t_7)$ we first load one partition ICRs between $t_4$ and $t_7$. $Total(t_4)$ is 2 and the sum of $New$ between $t_5$ and $t_7$ is 1, so that the count is 3. The final count is the sum of counts of all searched partitions. For cloud systems, our method is also easy to implement. For each partition, we employ one map task to process one partition ICRs. Finally, employ one reduce task to add up counts of different partitions.

Obviously, ICRs will impact not only the Aserv's insertion performance but also the accuracy of probing analysis due to the approximate search strategy of $region$ operator. However, the number of ICRs at $T$ is equal to the number of partitions. An acceptable query accuracy can be got through adjusting the number of partitions using Eq. (\ref{eq:accRegion}).

\section{Performance Constraint}\label{section:Performance}
Aserv can meet the performance constraints in two ways. First, we try to design map-only jobs to enable the predictable performance. All tasks in map-only jobs only process local data, so that the scale overhead is very little. For example, both parsing $region$ and scanning SEPI are map-only jobs. In addition, probing analysis only has a reduce task. Actually, the insertion component in essence can be treated as a map-only job, because workers have no data transmission to each other over the network. Second, on cloud we can scale out the cluster size to adjust the Aserv's performance. Therefore, consider a problem whether the performance constraints can be met for a given cluster size $K$.

The insertion latency consists of two parts: processing time and storage time spent to ingest data into key-value store. Aserv's insertion component can be seen as load-balance due to the same catalog size. Due to the map-only feature, the processing time is equal to the time $f_p(V_n/K)$ spent on processing $V_n/K$ data by one worker, where $V_n$ is the data size of $n$ objects collected per cycle. Using a similar derivation, the storage time is nearly equal to $f_s(V_s/K)$ due to the linear scalability of key-value cloud store\cite{Stonebraker201110} where $V_s$ is the size of the stored data. Then, we define the insertion latency with the cluster size $K$ as
\begin{equation}\label{eq:insertionlatency}
{f_p}({V_n}/K) + {f_s}({V_s}/K) \le ct,
\end{equation}
where $ct$ is the survey cycle. In addition, $f_p+f_s$ is easy to be estimated. We only run the insertion component using one worker with the data size $V_n/K$ and capture the actual insertion latency as $f_p+f_s$.

Similarly, the query latency also involves the reading time $f_r(V_r/K)$ spent on loading data from key-value store and the query execution time $f_q(V_r/K)+f_o(K)$, defined as
 \begin{equation}\label{eq:querylatency}
   f_r(V_r/K)+f_q(V_r/K)+f_o(K)\le ct.
 \end{equation}
More specifically, $f_r$ and $f_q$ are the parallel time being like to $f_p$, but the distributed workloads will incur the scale overhead $f_o$. We find that query workloads in Aserv only exchange data over the cluster by the shuffle pattern, so that $f_o$ includes shuffle I/O overhead and some fixed overhead, such as setting up processes or time spent in serial computation, etc. Actually, we can assign $f_o={\theta}_1K+{\theta}_2$ where ${\theta}_1$ and ${\theta}_2$ are the constants, because shuffle phases satisfy all-to-one communication pattern\cite{venkataraman2016ernest}. For estimating ${\theta}_1$ and ${\theta}_2$, we can set a cluster size $K'$ ($K'<K$) and run the query workload over the data size $K'V_r/K$ to capture the execution time $f_a(K'V_r/K)$. Therefore, we can set $f_o(K')=f_a(K'V_r/K)-f_r(V_r/K)-f_q(V_r/K)$. We also assign $K'$ with different values to solve $f_o$ as the training data, so that ${\theta}_1$ and ${\theta}_2$ can be solved by the linear regression. We consider that Aserv is to nearly meet the performance constraints, when $K$ satisfies both Eq. (\ref{eq:insertionlatency}) and (\ref{eq:querylatency}).

In essence, our approach is to measure the parallel time and predict the communication time. It has important reference value for predicting the performance of short-running tasks. Many models, such as Ernest\cite{venkataraman2016ernest}, are effective to fit the performance of long-running tasks due to the clear computation pattern. However, the computation pattern in short-running tasks is hard to be fitted due to the strong noise. Thus, our approach is  more suitable for short-running tasks. The strategy how to measure Aserv's parallel time is shown in Section \ref{section:performanceConstraintEvaluation}.

\section{Experiments}\label{section:experiments}
We evaluate Aserv from four views: insertion latency, data reduction rate, query latency and query accuracy under a typical astronomical scenario GWAC\cite{wan2016column} in which each observation unit can collect $\sim$175,600 objects per 15 seconds and one observation lasts 8 hours (about 1,920 time points). We simulate GWAC with a data generator.

\textbf{Data generator}. Our data generator follows the ``master/slave" mode, where each sub-generator simulates an observation unit. A sub-generator produces a catalog file per cycle, including $\sim$175,600 lines and 25 columns. The object locations are referenced from the standard UCAC4 catalog\cite{zacharias2013fourth}. In addition, we simulate scientific event signals by setting the $Eset$ size to subject to the geometric distribution and the locations of scientific events to subject to the uniform distribution. The duration of each scientific event is random.

 \textbf{Cluster setup}. We take our experiment on 20 cloud instances supported by Computer Network Information Center, Chinese Academy of Sciences. Each instance has 12 CPU cores (1.6 GHz per core) and 96 GB RAM. The network bandwidth is 10 Gbps. For data generator, we launch 19 sub-generators (i.e., one per instance) and use the last instance as the master. Finally, our cluster will generate 3.5 million rows of catalog data per 15 seconds, and 6.7 billion rows of catalog data per 8 hours. We build the Aserv's cluster on the same 20 instances where each worker in insertion component loads catalogs produced by a sub-generator on the local machine. We use C++ to implement the insertion component and exploit Redis cluster 3.2.11\cite{webpage:Redis} as the storage system. Spark 1.6.3\cite{webpage:Spark} is used for query processing. Experiments consist of three parts as follows.
  \begin{itemize}
    \item We compare insertion latency and data reduction rate under three different numbers of partitions. The insertion latency is 2.35 seconds and Aserv can achieve 2.23x data reduction rate under the optimal number of partitions (i.e., 10,000).
    \item We show the performance of three analysis methods and demonstrate query accuracy. They in Aserv can satisfy the interactive performance. Probing analysis using PCAG is 1.57x-2.28x faster than the existing implementations. Listing analysis by SEPI is 2.22x faster than it by EPI. Stretching analysis can respond in milliseconds. The query accuracy is achieved to 0.9.
    \item We use a few machines to predict Aserv's performance over the larger cluster size. Our performance model is effective to Aserv. The accuracy of predicted insertions latency is 0.96, and it is 0.86 for query latency.
  \end{itemize}


\begin{figure*}[t]
  \begin{minipage}[t]{0.33\linewidth} 
    \centering
    \includegraphics[width=2.2in,height=1.5in]{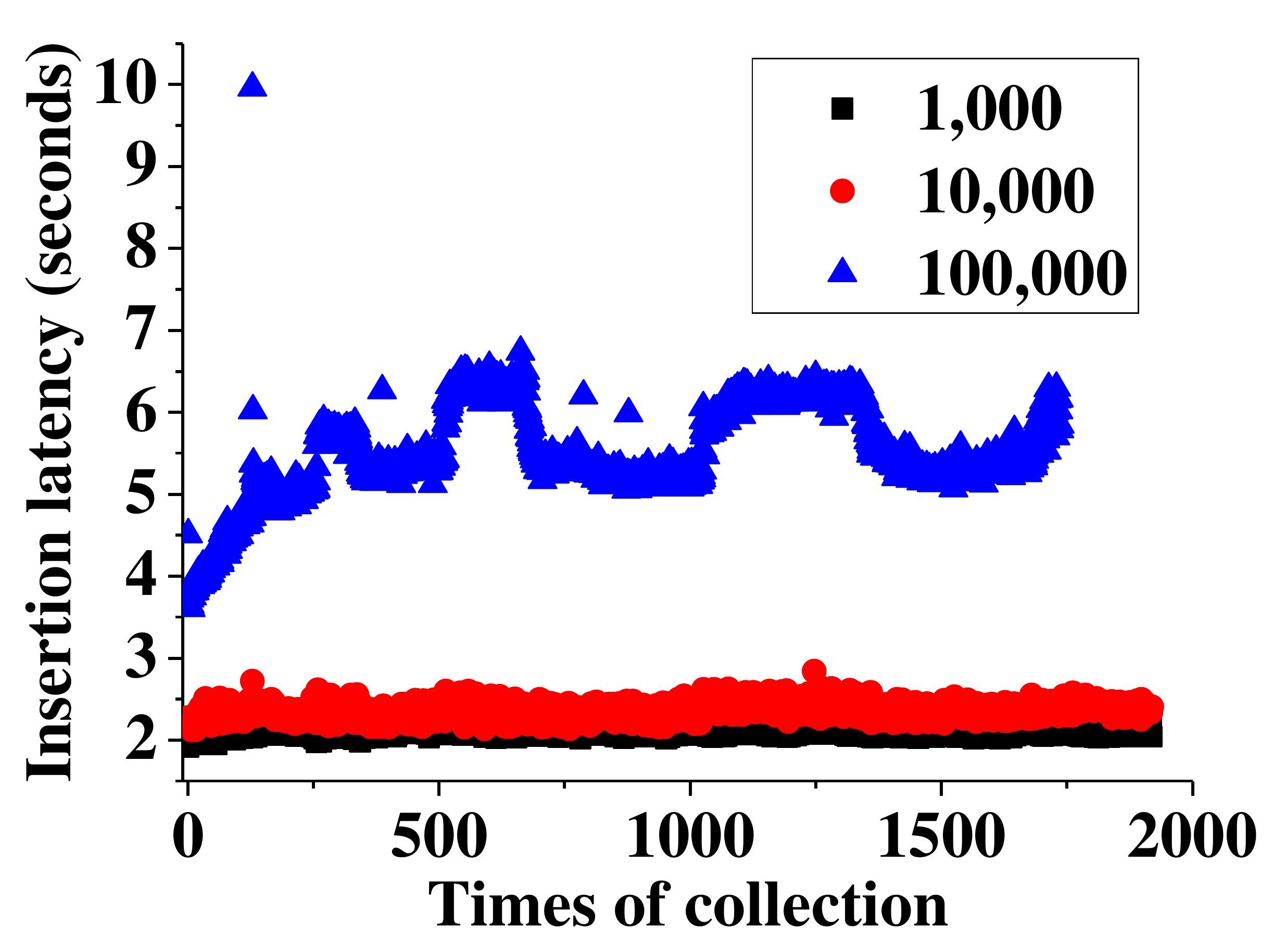}
    \caption{Insertion latency}
    \label{fig:Insertionlatency}
  \end{minipage}%
  \begin{minipage}[t]{0.33\linewidth}
    \centering
    \includegraphics[width=2in]{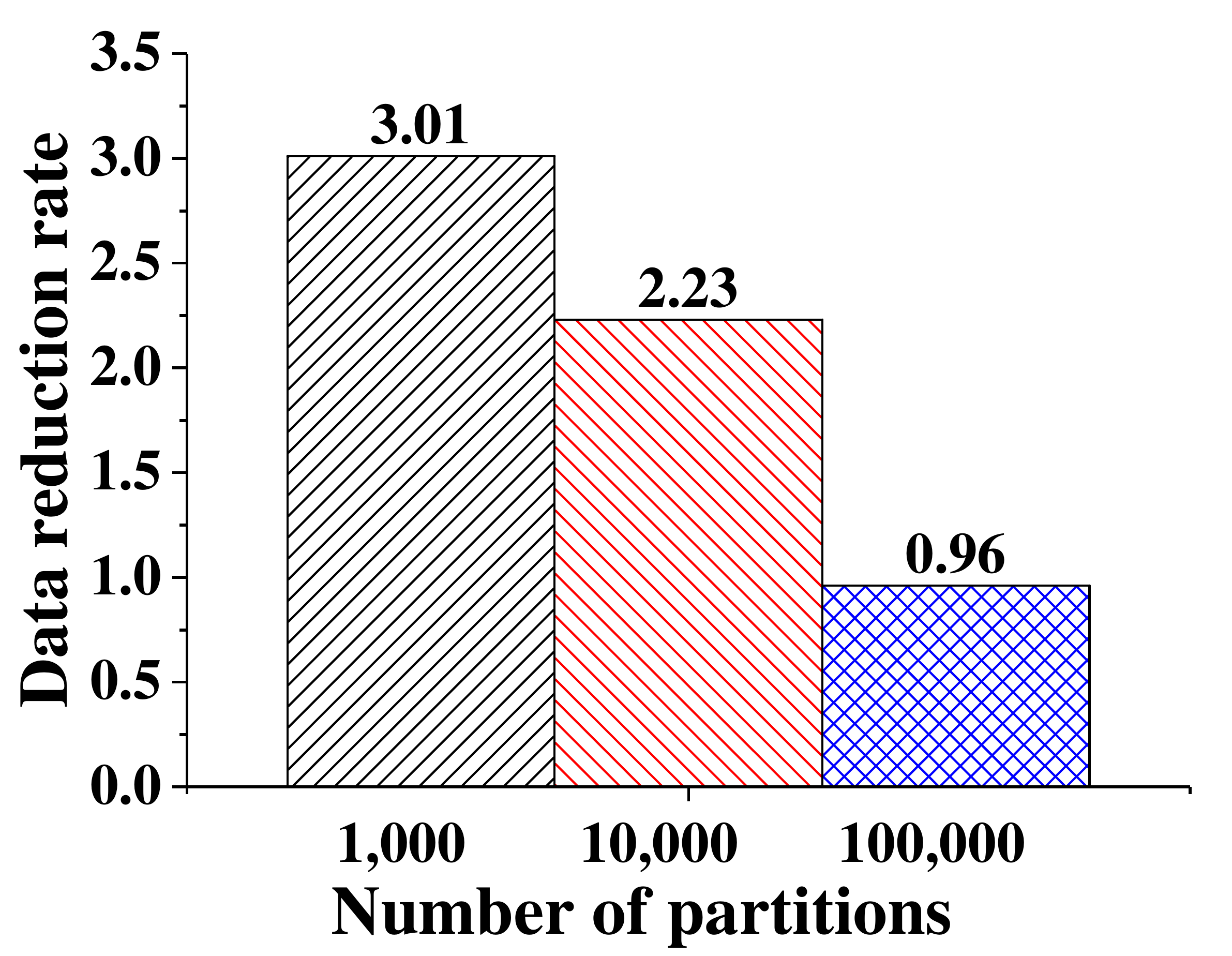}
    \caption{Data reduction rate}
    \label{fig:datareductionratio}
  \end{minipage}
  \begin{minipage}[t]{0.33\linewidth}
    \centering
    \includegraphics[width=2in]{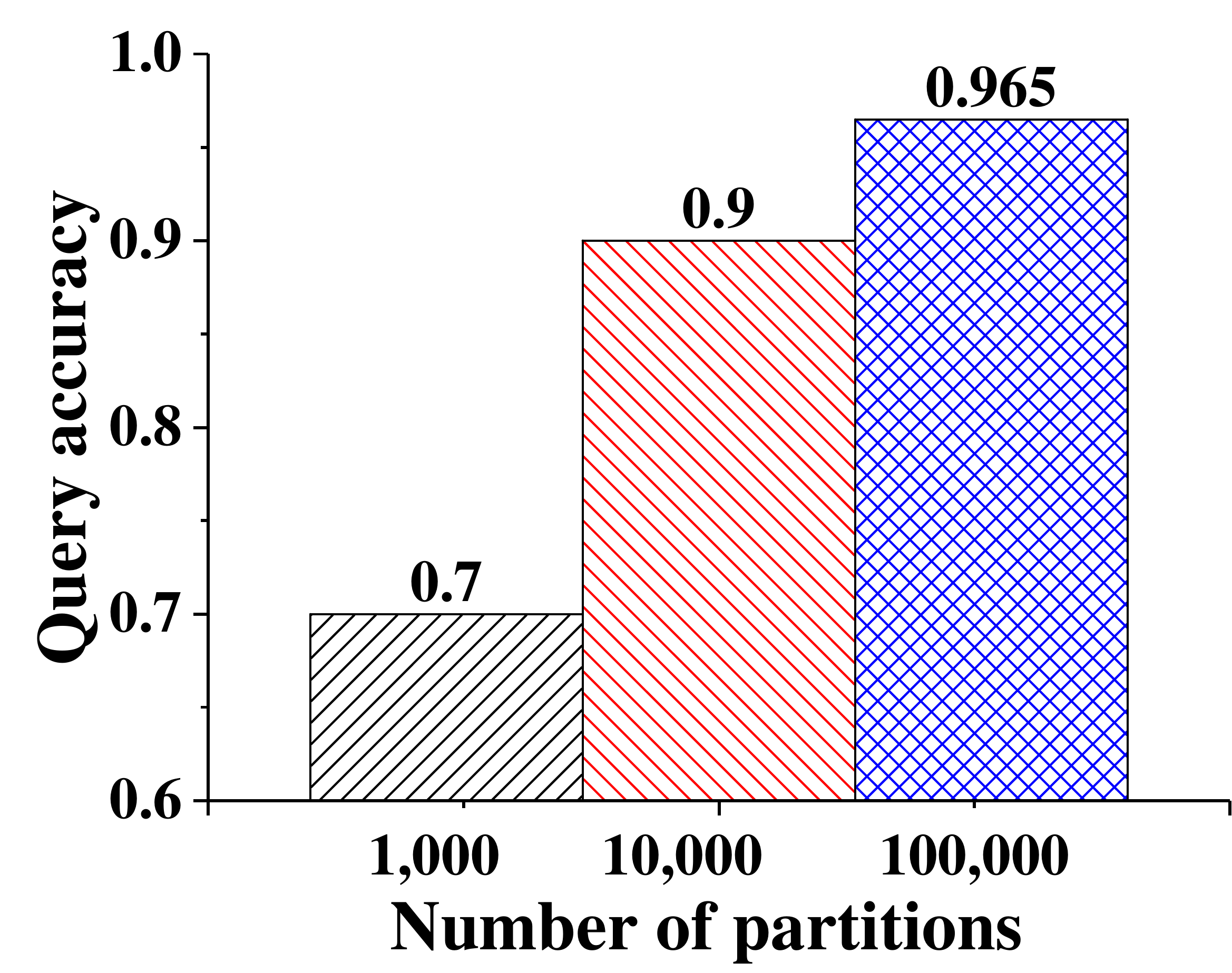}
    \caption{Query accuracy}
    \label{fig:queryaccuracy}
  \end{minipage}
\end{figure*}

\subsection{Insertion Performance Evaluation}\label{section:ClusterSize}
In general, the minimum region search accuracy $\alpha$, which can be accepted by scientists, is 0.8.
The minimum search circle is about 3\% of the area observed by an observation unit. Based on these conditions, the available number of partitions is solved to be about 10,000 for every observation unit using Eq. (\ref{eq:accRegion}). In the experiment, we do not only show the Aserv's performance under 10,000 partitions, but also demonstrate it under both 1,000 and 100,000 partitions as the comparisons.

As shown in Figure \ref{fig:Insertionlatency}, at the first two cases Aserv can finish 1,920 times of data collection. However, we only collect data 1,738 times under 100,000 partitions because too many keys (i.e., partitions) can cause Redis cluster to fail frequently as the survey data continues to be ingested.

\textbf{Insertion latency}. At the three cases, the insertion performance constraint can be met. However, as the number of partitions increases, the insertion latency becomes longer. Especially, the average latency under 100,000 partitions is 2.35x higher than it (2.35 seconds) under 10,000 partitions. Too many partitions incur lots of key-value pairs to be ingested into Redis cluster. The status information of each key-value pair is required to return to Aserv synchronously from Redis cluster to determine the successful ingestion. This procedure depends on the network response delay. Thus, our partition strategy can improve the insertion performance.

\textbf{Data reduction rate}. Although we use DAfilter to reduce data size, but the large number of partitions will also result in a poor data reduction. We use $MD/OD$ as the data reduction rate. $OD$ is the size of original data which is about 1,176 GB with 1,920 times of data collection and 1,064 GB with 1,738 times. $MD$ is the memory size consumed by Redis cluster. As shown in Figure \ref{fig:datareductionratio}, the number of partitions is negatively correlated with the data reduction rate, because Redis cluster needs to take more overhead to keep more key-value pairs, such as the extra overhead of data structure, etc. With the reduction of partitions, Aserv can achieve a satisfactory data reduction rate. For example, only 23 GB RAM per instance is consumed when the number of partitions is 10,000. It suggests less budget to pay for cloud resources. However, as the number of partitions comes to 100,000, the consumed memory is even greater than the size of original data causing DAfilter failure. Thus, less partitions in Aserv can reduce the memory consumption.

\begin{figure*}[t]
  \begin{minipage}[t]{0.33\linewidth} 
    \centering
    \includegraphics[width=2in]{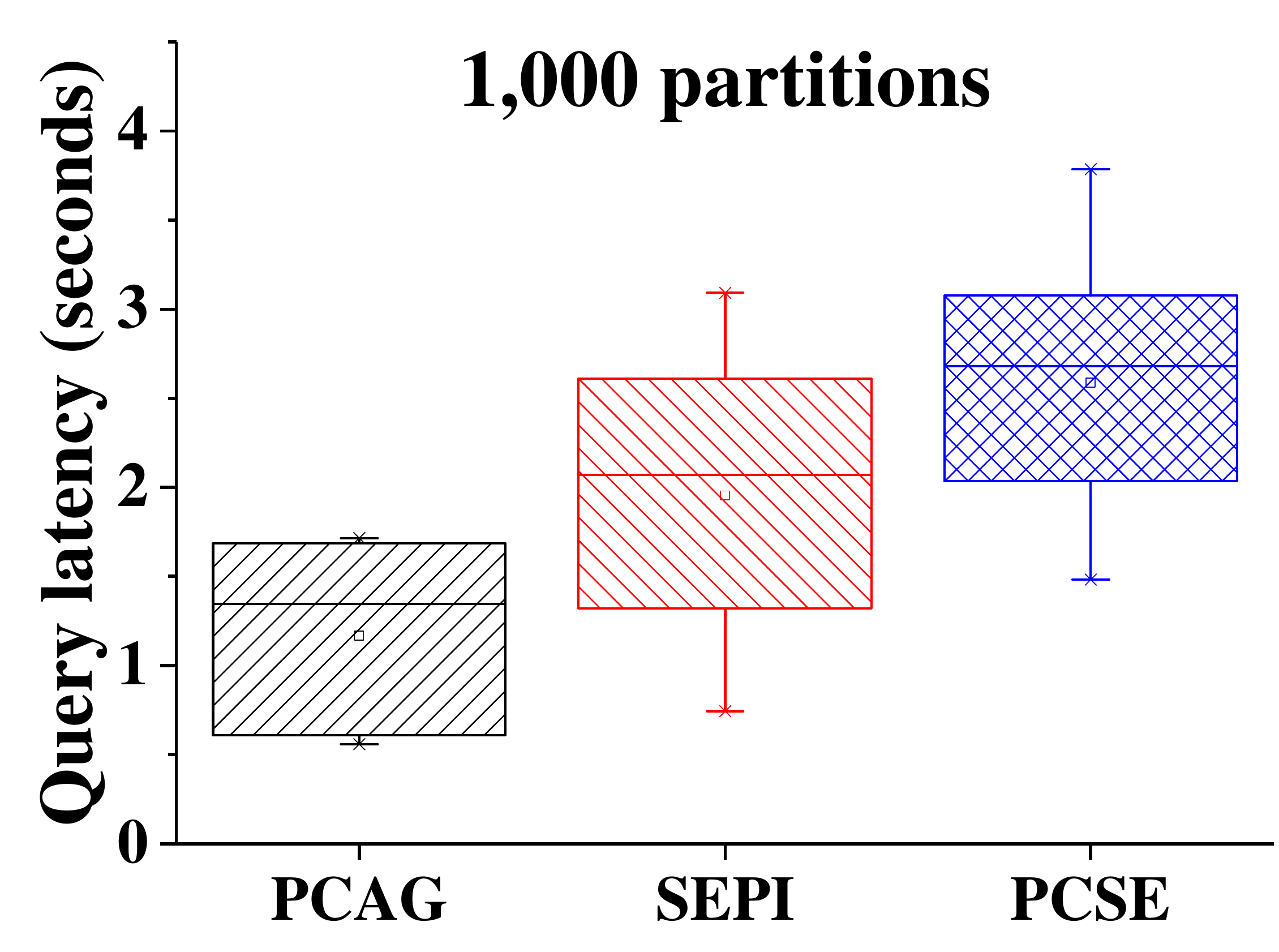}

  \end{minipage}%
  \begin{minipage}[t]{0.33\linewidth}
    \centering
    \includegraphics[width=2in]{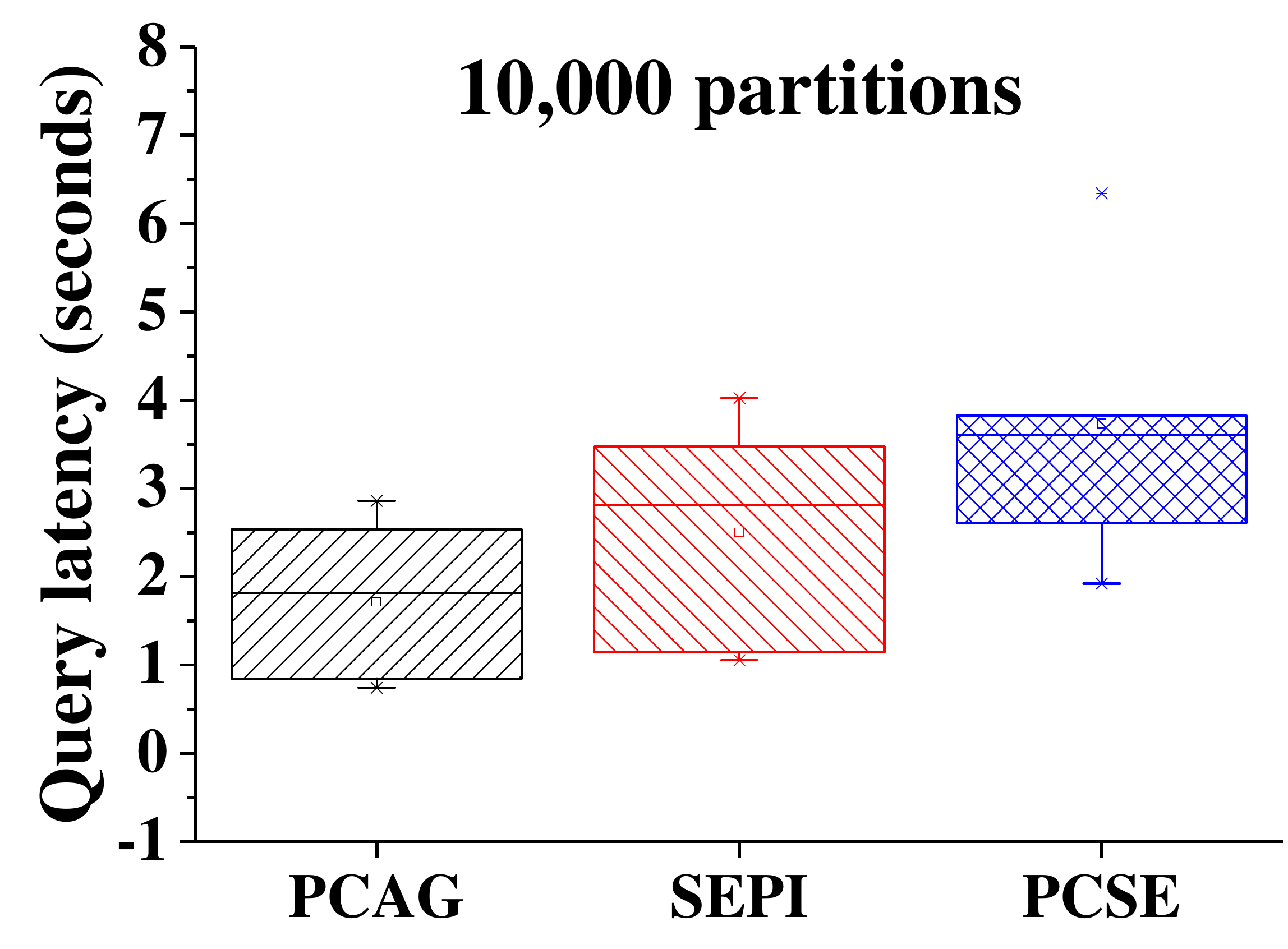}
  \end{minipage}
  \begin{minipage}[t]{0.33\linewidth}
    \centering
    \includegraphics[width=2in]{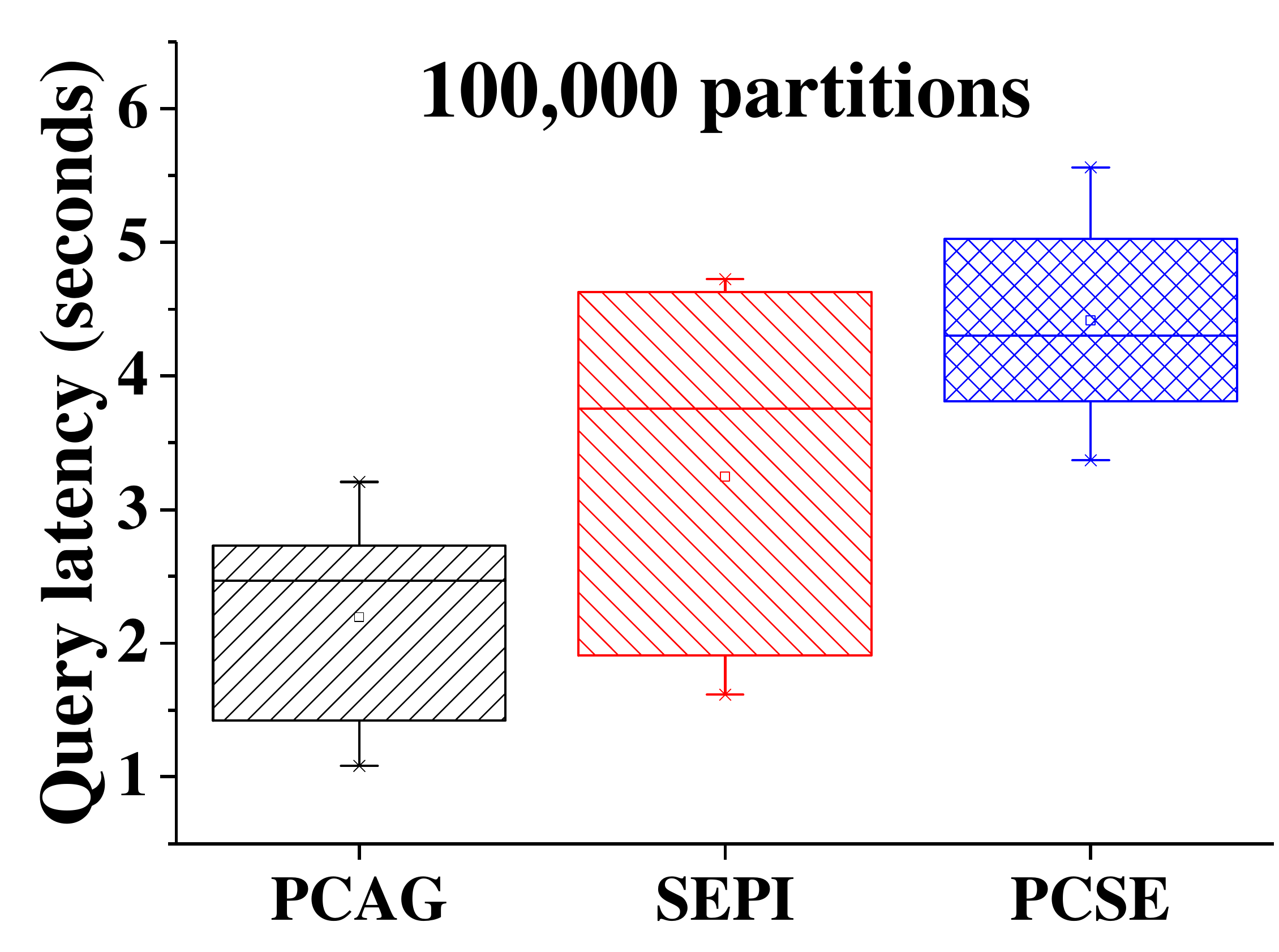}
  \end{minipage}
      \caption{Query latency for probing analysis}
    \label{fig:querylatencyforscientificeventcount}
\end{figure*}

\begin{figure*}[t]
  \begin{minipage}[t]{0.33\linewidth} 
    \centering
    \includegraphics[width=2in]{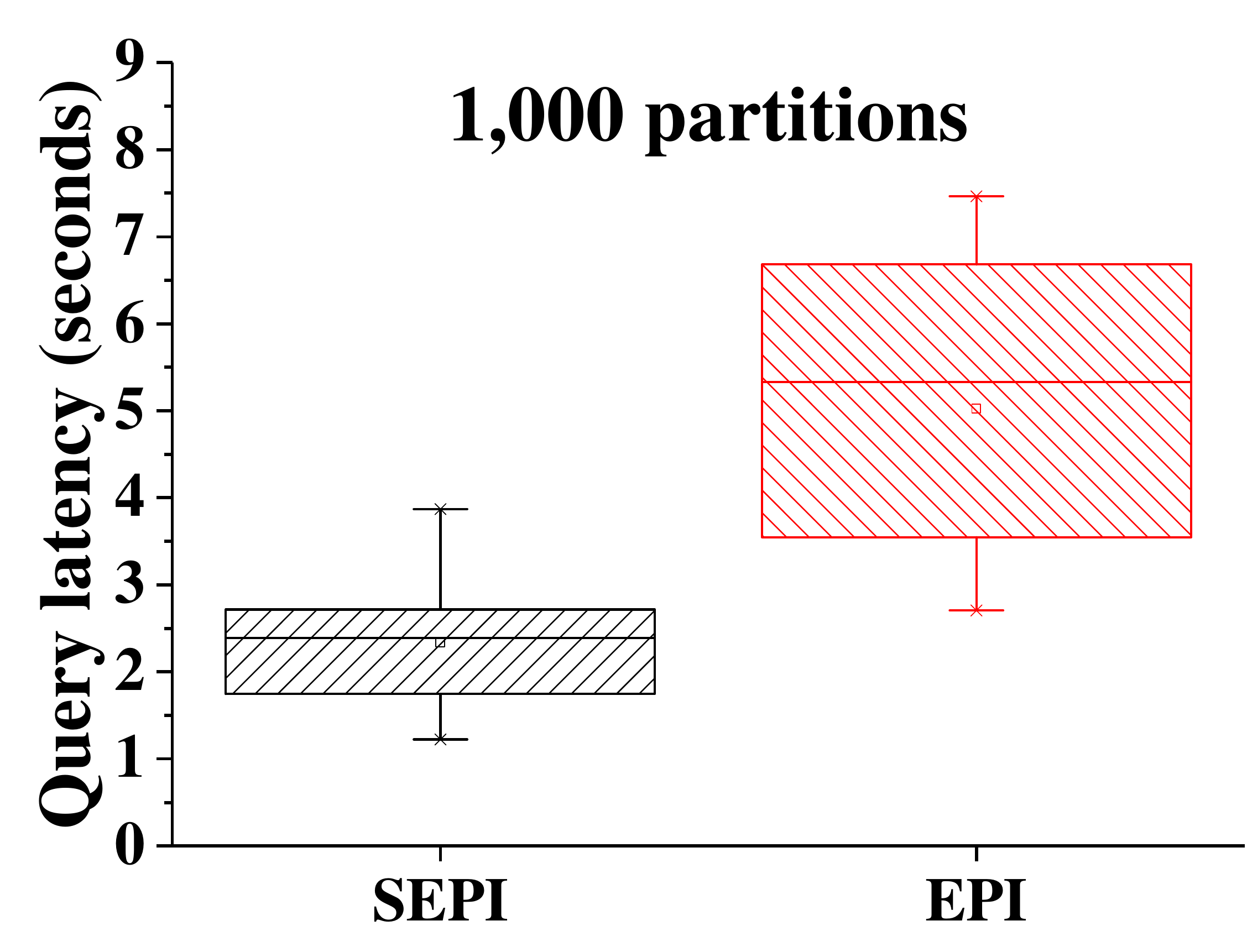}

  \end{minipage}%
  \begin{minipage}[t]{0.33\linewidth}
    \centering
    \includegraphics[width=2in]{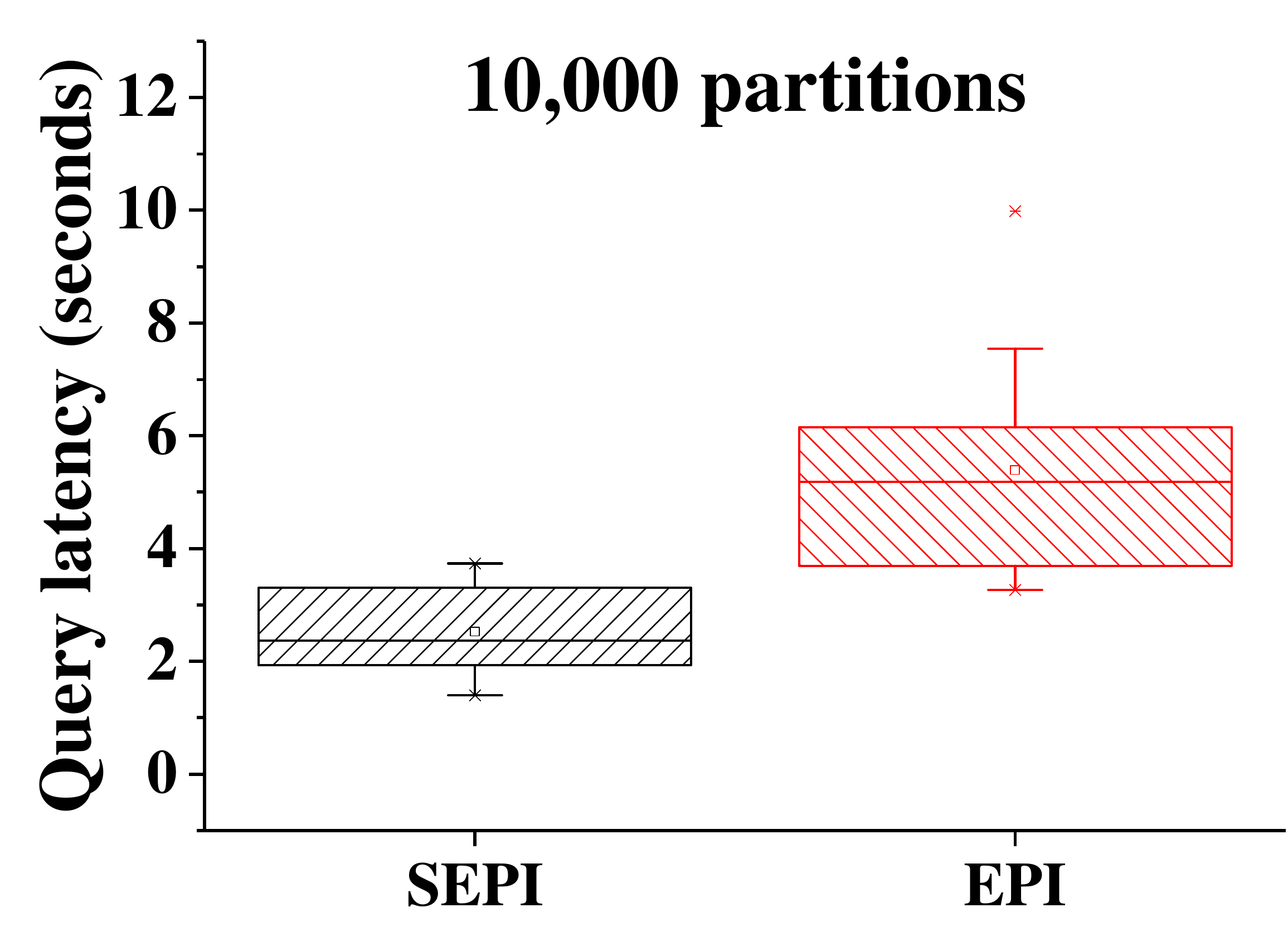}
  \end{minipage}
  \begin{minipage}[t]{0.33\linewidth}
    \centering
    \includegraphics[width=2in]{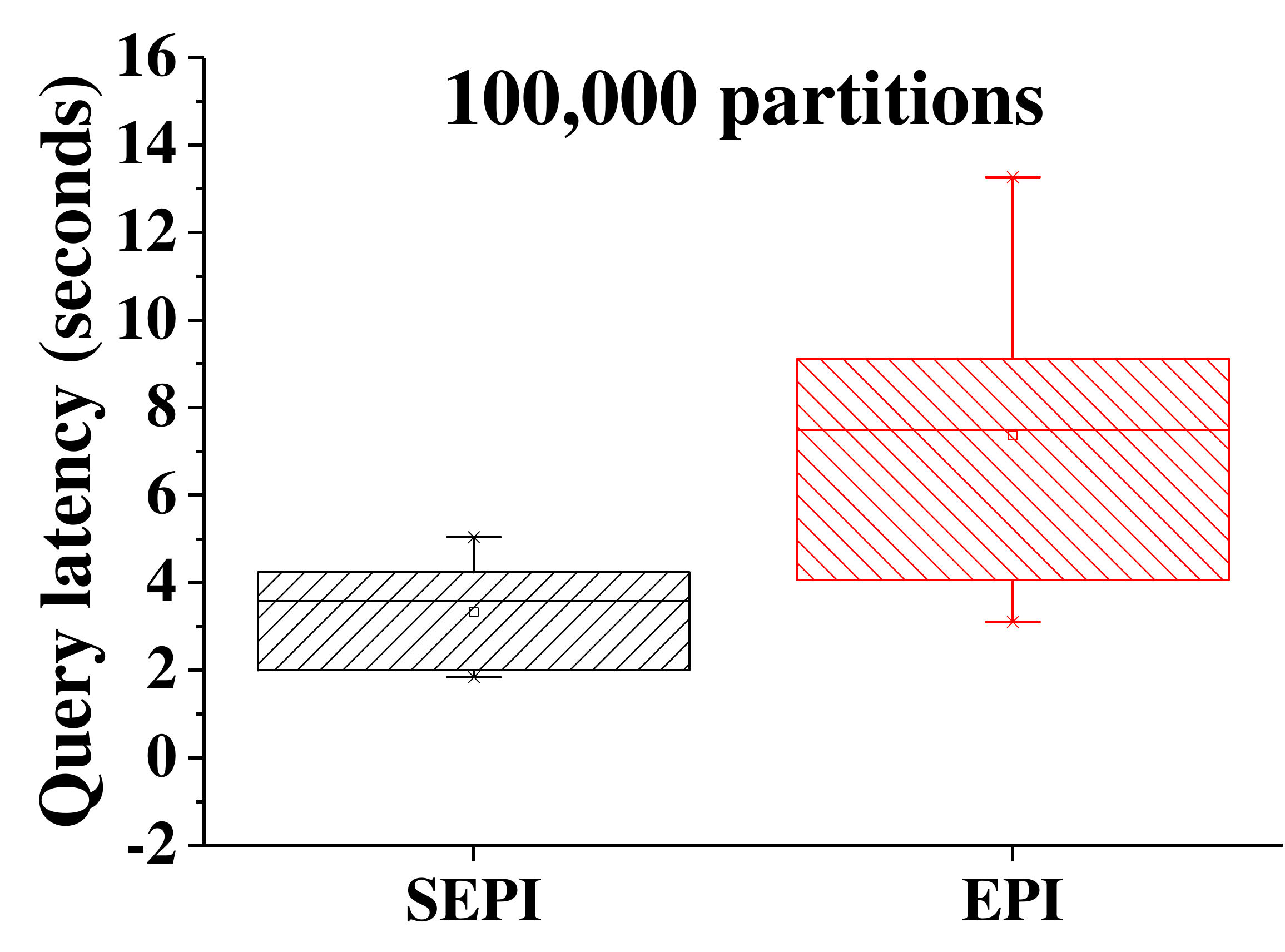}
  \end{minipage}
      \caption{Query latency for listing analysis}
    \label{fig:querylatencyforintervalquery}
\end{figure*}

\subsection{Query Performance Evaluation}\label{section:ClusterSize}

We refer to the LSST telescope discovery ability \cite{becla2008organizing} and assume that the accuracy of event detector is 0.5. Finally, we decide to simulate the generation of 200,000 scientific events one night. We randomly set 10 different sets of parameters for $region$ and $timeinterval$ operators to ensure that probing analysis and listing analysis can find about 50$\sim$5,000 scientific events. The result size is often concerned by scientists in interactive analysis. We evaluate the query latency on the maximum data size (i.e., 8 hour data for GWAC). It can represent the worst query performance in Aserv, because the data size processed by a query who is issued during the observation is less than the maximum data size.

\textbf{Probing analysis}. As shown in Figure \ref{fig:querylatencyforscientificeventcount}, we implement probing analysis by three different approaches and compare their performance. Probing analysis using PCAG and SEPI can get the approximate count. PCSE\footnote{Precise count of scientific events: first executing listing analysis and then counting the scientific events who really are inside the given search circle.} can get the precise count. At the three cases, the average query latency using PCAG is about 1.65 seconds which is 1.57x and 2.28x faster than it using SEPI and PCSE, respectively. It is obvious that scanning SEPI and parsing $region$ operator for every scientific event have more overhead than count by merging ICRs. In addition, as partitions becomes more, the query performance significantly reduces. For example, the query performance under 100,000 partitions reduces by 22\% compared with it under 10,000 partitions. The main reason is that it will incur more overhead to load more partitions and parse $region$ operator on them. However, it does not mean that the number of partitions should be as few as possible. At the extreme case, all objects are assigned into one partition, and it is hard to use PCAG to determine the approximate count.

\textbf{Query accuracy}. Although the reduction of partitions can improve the insertion and query performance, the query accuracy would reduce due to the failure of approximate region search. Actually, the query accuracy using PCAG and SEPI is the same due to the same way to parse $region$ operator so that we only use $count_{PCSE}/count_{PCAG}$ as the query accuracy. We evaluate the accuracy of 10 probing analytical queries, respectively and demonstrate the minimum accuracy as the final result in Figure \ref{fig:queryaccuracy}. It is within expectations that the actual query accuracy is 0.9 greater than the acceptable accuracy 0.8 under 10,000 partitions. Compared with 100,000 partitions, through 6.5\% accuracy loss we achieve the significant performance improvement.

\textbf{Listing analysis}. As shown in Figure \ref{fig:querylatencyforintervalquery}, we demonstrate the performance of listing analysis implemented by SEPI index and EPI index\cite{sfakianakis2013interval} in Redis cluster, respectively. Two different implementations parse the same $region$ operator, so that the necessary region search dose not lead to performance differences. The average query latency of three partition cases by SEPI is average 2.72 seconds that can meet the performance constraints. In addition, the query performance by SEPI is 2.22x higher than it by EPI. We consider that the key reason for the poor performance of EPI is two scans of key-value store and $distinct()$. Therefore, we capture time $T_s$ spent on two scans and find that it is 53.33\% of time spent on parsing EPI. In addition, $distinct()$ spends about 7\% of parsing time under 10,000 partitions. It can be explained that two scans and $distinct()$ have the high overhead. Fortunately, SEPI excludes one extra scan and $distinct()$. Moreover, the single scan of SEPI needs to load less key-value pairs than EPI because the SEPI's size is half of EPI. It leads that time spent on $scan()$ for SEPI only is 36.25\% of $T_s$.

\textbf{Stretching analysis}. We randomly select a scientific event which has been found by listing analysis. Furthermore, set $timeinterval(stime-\Delta t_1,etime+\Delta t_2)$ as the temporal range, where $stime$ and $etime$ are two endpoints of this scientific event and $\Delta t_1=\Delta t_2=20$ minutes being from the real demand of astronomers. We find that the average query latency is 0.34 second under 1,000 partitions, 0.2 second under 10,000 partitions and 0.19 second under 100,000 partitions, respectively. The query performance slightly reduces under less partitions due to more objects in each partition. The outstanding performance is because Aserv only needs to load one partition of data involving the corresponding object.
\subsection{Performance Constraint Evaluation}\label{section:performanceConstraintEvaluation}
In this section, we evaluate the accuracy of performance model when a cluster size $K$ is given. We set $K=19$ excluding one master node and use $acc_p=1-|T_e-T_a|/T_a \in [0,1]$ as the prediction accuracy, where $T_e$ is the estimated execution time and $T_a$ is the actual execution time. If the prediction accuracy is less than 0, we will set it into 0. We use the case of 10,000 partitions to take experiments, and all tests will take 1,920 times of data collection.

\textbf{Insertion latency prediction}. We build Aserv's cluster on two cloud instances to simulate the data generation of an observation unit. One instance is for the insertion component and another is for Redis cluster. The size of data being collected will be about equal to 1/19 of the total data size. Finally, we use the average insertion latency $f_p+f_s$ in Eq. (\ref{eq:insertionlatency}) as the estimated insertion latency $T_e$ of 19 nodes. We find that $T_e$ is 2.25 seconds which is less than $ct$. It suggests that Aserv's insertion latency can meet the performance constraint when $K=19$. In addition, we have solved $T_a=2.35$ seconds in Figure \ref{fig:Insertionlatency}. Therefore, the prediction accuracy is 0.96. It is explained that both the insertion component and Redis cluster have a good linear scalability.
\begin{table}[tbp]
\caption{Training data and predicted query latency}
 \centering
\begin{tabular}
{|m{2cm}<{\centering}|m{1.6cm}<{\centering}|m{1.6cm}<{\centering}|m{1.6cm}<{\centering}|}
\hline
\textbf{}&\textbf{Probing analysis}&\textbf{Listing analysis}&\textbf{Stretching analysis}\\
\hline
$f_r+f_q$&0.574 &1.07 &0.122 \\
\hline
$f_o(3)$&0.06 & 0.3&0.144 \\
\hline
$f_o(5)$&0.22 & 0.404&0.148 \\
\hline
$f_o(10)$&0.606 & 0.664&0.151 \\
\hline
$T_a$ (seconds)&1.72 & 2.52 & 0.202 \\
\hline
$T_e$ (seconds)&1.88 & 2.2 & 0.161 \\
\hline
$acc_p$&0.905& 0.873& 0.805\\
\hline
\end{tabular}
\label{tab:predictionresultsofquerylatency}
\end{table}

\textbf{Query latency prediction}. To evaluate the parallel time $f_r+f_q$, data distribution on one cloud instance needs to be simulated when $K=19$. Partition IDs are designed to be continuous. They naturally subject to uniform distribution, so that Redis cluster can easily place data evenly over the cluster. Therefore, we employ the modulus strategy for partition IDs (i.e., modulo 19) to select the corresponding partitions and ingest them into Redis cluster. We build Aserv's cluster on two cloud instances, one of which is employed to simulate the data generation of 19 observation units with the modulus strategy and another is for Redis cluster. It is noted that we do not take this way to evaluate the insertion latency, because 19 insertion processes in one instance may share the resources to impact the insertion performance. Only on the instance that Redis cluster resides in, we launch query engine and use the actual execution time as $f_r+f_q$.

 For the scale overhead, we build Aserv's cluster on $K'$ instances and also simulate data distribution of 19 observation units with the modulus strategy to evaluate $f_o(K')$. Firstly, we try to employ the modulus strategy to ensure that the number of partitions on $K'$ instances is close to $K'$ times of the number of partitions on one instance when $K=19$, so that we set $K' \in \{3,5,10\}$. For example, partition IDs are modulo 6 when $K'=3$. Then, we launch query engine on $K'$ instances and capture the actual query latency to solve $f_o(K')$ as the training data. Finally, we use Levenberg-Marquardt solver \cite{Mor1978The} to find $f_o$ that best fits the training data.

 For each analysis method, we also run them 10 times with different parameters and list the average results in Table \ref{tab:predictionresultsofquerylatency}.  When $K=19$, the estimated query latency solved by Eq. (\ref{eq:querylatency}) is also less than 15 seconds. It suggests that the performance of query engine can meet the performance constraint. Although only 3 points are used for training, the average prediction accuracy is 0.86, which is high enough to help scientists estimate Aserv's query performance and cloud resource setup. On the one hand, our queries implemented on Spark do not contain the complex communication pattern. On the other hand, our strategy is effective to directly capture the parallel time in Aserv.  Therefore, we can use a linear model to best fit a few training data.

\section{Related Work}\label{section:relatedWork}
Real-time and low latency scientific event analysis in fast sky survey has not been previously addressed. Work related to ours includes efforts to (1) real-time databases, (2) scientific databases, (3) low latency cloud data stores, and (4) scientific event indexing.

\textbf{Real-time databases}. Real-time databases have been studied since 1980s, and the key goal is to enable as many real-time transactions as possible to meet their respective time constraints\cite{Abbott1988Scheduling}. Real-time databases are more concern with timeliness, not system speed\cite{Stankovic1988Misconceptions}, due to a basis hypothesis that catastrophic consequences do not happen in the real world if a transaction is finished within the deadline. Hence, many of works focus on scheduling\cite{Abbott1988Scheduling,Haritsa1993Value} and transaction processing\cite{Lindstr2008Real,Kang2017Reducing}. Storing and processing all the data under periodic time constraint can avoid data loss and ensure temporal data consistency maybe enough for traditional real-time databases, but the transient feature of scientific event requires that the online query latency should be as low as possible. Only in this way, we can exploit the value of scientific data. Periodic survey cycle and the unpredictability of scientific event propose the new challenge for online big scientific data analysis.

\textbf{Scientific databases}. As big scientific data becomes more and more important in scientific discovery, many scientific databases are developed. Some are for specific science projects, such as SkyServer\cite{szalay2002sdss} for SDSS and Qserv\cite{Wang2011Qserv} for LSST, etc. Others are more versatile, such as SciDB\cite{Stonebraker2013SciDB} and SciServer\cite{webpage:sciserver}, etc. SciServer, as a major upgrade of SkyServer, can support a collaborative research environment for large-scale data-driven science. However, they are exploited to store the long-term historical data for complex analysis. It cannot meet the requirements of online analysis. There are works that focus on real-time challenge of scientific data, but most of them\cite{Kruger2005An,wan2016column} are for processing and storing original data, not analyzing scientific events. In terms of scientific events, \cite{becla2008organizing} only can work on the new scientific event signal to be rapidly cross-correlated with the huge historical catalogs through the rational data organization. Real-time wildfire monitoring\cite{Koubarakis2013Real} is essentially consistent with ours. The observed wildfire is also a kind of continuous scientific events. However, \cite{Koubarakis2013Real} does not support the typical analysis methods proposed by us. 

\textbf{Low latency cloud data stores}. These distributed systems, such as Redis cluster\cite{webpage:Redis}, Memcached\cite{webpage:Memcached} and MemSQL\cite{Skidanov2016A}, leverage main memory techniques to store and manage lots of data to enable the low latency access. The lack of the specific optimization mechanism for scientific event is their major difference with Aserv. These systems may pay more access time or budget to analyze scientific events on cloud since they only store the original catalogs and the scientific event signal sets. In addition, they do not give any approach to provide the guaranteed performance. This is not acceptable in our scenario. Thus, these systems could only be taken as the basic storage systems of Aserv. Some works leverage features of specific applications to propose the optimal approaches to reduce data access latency. DITIR\cite{cai2017ditir} is a storage system for high throughput trajectory insertion and real-time temporal range query. It uses B+ tree as the basic index to support temporal range query well. However, the current structure is not suitable for region search supported by Aserv.

\textbf{Scientific event indexing}. We formulize scientific event data as time-evolving data\cite{salzberg1999comparison}, which can be analyzed by segment tree\cite{Berg1997More}. In our scenario, the index tier must support the frequent write and low latency scan. The typical approaches are delayed commit and variants of segment tree. Delayed commit, such as MRST\cite{sfakianakis2013interval} and PISA\cite{Huang2016PISA}, accumulates the insertion and update operations and perform them together with a fixed cycle, but the index rebuild\cite{sfakianakis2013interval} or window generation\cite{Huang2016PISA} will cause the load latency when frequently writing. Variants of segment tree, such as SB-tree\cite{yang2003incremental}, Balanced-tree\cite{Moon2003Efficient}, transform segment tree as other structures (e.g., B tree or red-black tree), to support insertion and update. However, rotation and split still be involved, causing them to be not efficient. In summary, segment tree are not suitable for our scenario. Endpoint indexes are also employed for scientific event analysis. EPI\cite{sfakianakis2013interval} has the poor query performance. EIIQHBase\cite{zhou2016efficient} is presented to optimize EPI's query performance. However, it must in advance know two endpoints of scientific events to build the index structure. In our scenario, scientific events are unpredictable, so EIIQHBase is not feasible. SEPI learns from the advantages of EPI's structure to support frequent insertion and update and further improve the query performance.
\section{Conclusion}\label{section:summary}
In this paper, we propose three basic analysis methods for the fast sky survey and develop a distributed system Aserv to implement real-time and low latency scientific event analysis. The unnecessary cost is cut down to help us achieve an accuracy aware approach to improve the analysis performance. We trade off the query accuracy, the resource consumption (mainly memory and network), insertion and query latency by modeling scientific event and adjusting the number of partitions. Ultimately, achieve the overall balance of the large-scale scientific data analysis system. The specific optimization methods include DAfilter, EPgrid, SEPI index, PCAG and a performance model. Aserv can be downloaded from \url{https://github.com/yangchenwo/Aserv.git}.
\section{Acknowledgement}
This research was partially supported by the grants from National Key Research and Development Program of China (No. 2016YFB1000602, 2016YFB1000603); Natural Science Foundation of China (No. 91646203, 61532016, 61532010, 61379050, 61762082); Fundamental Research Funds for the Central Universities, Research Funds of Renmin University (No. 11XNL010); and Science and Technology Opening up Cooperation project of Henan Province (172106000077).

\bibliographystyle{IEEEtran}
\bibliography{IEEEexample}
\end{document}